\documentclass{article}
\usepackage{graphicx} 
\usepackage[hyphens]{url}
\usepackage{amsmath}
\usepackage{amssymb}
\usepackage{amsthm}
\usepackage{xcolor,hyperref}
\usepackage{algpseudocode}
\usepackage{algorithm}
\usepackage{booktabs}
\usepackage{comment}
\usepackage[normalem]{ulem}
\usepackage{authblk}

\usepackage{tikz}
\usetikzlibrary{arrows,positioning, calc}
\tikzstyle{vertex}=[draw,fill=black!15,circle,minimum size=20pt,inner sep=0pt]

\newtheorem{proposition}{Proposition}
\newtheorem{corollary}{Corollary}
\newtheorem{conjecture}{Conjecture}
\newtheorem{example}{Example}
\newtheorem{remark}{Remark}

\DeclareMathOperator{\res}{Res}

\def\degDistr{\delta}

\title{Attacking the {\it Polynomials in the Maze of Finite Fields} problem\footnote{This work was supported in part by the Research Council of Finland grant (\#351271), the Business Finland Co-Innovation Consortium grant (\#BFRK/473/31/2024, 5845483), and the R\&D+i project PID2021-124613OB-I00 funded by MICIU/AEI/10.13039/501100011033 and FEDER, EU.}}

\author[1]{\'Angela Barbero}
\author[2]{Ragnar Freij-Hollanti}
\author[2]{Camilla Hollanti}
\author[3]{Håvard Raddum}
\author[4]{Øyvind Ytrehus}
\author[3,4]{Morten Øygarden}
\affil[1]{Universidad de Valladolid, Spain}
\affil[2]{Aalto University, Finland}
\affil[3]{Simula UiB, Norway}
\affil[4]{University of Bergen, Norway}

\date{}

\begin{document}

\maketitle

\begin{abstract} In April 2025 GMV announced a competition for finding the best method to solve a particular polynomial system over a finite field.  In this paper we provide a method for solving the given equation system significantly faster than what is possible by brute-force or standard Gröbner basis approaches.  The method exploits the structured sparsity of the polynomial system to compute a univariate polynomial in the associated ideal through successive computations of \emph{resultants}. A solution to the system can then be efficiently recovered from this univariate polynomial.
Pseudocode is given for the proposed {\tt ResultantSolver} algorithm, along with experiments and comparisons to rival methods. We also discuss further potential improvements, such as parallelizing parts of the computations.
\end{abstract}

\section{Background}
In April 2025 the company GMV issued a competition for finding good methods to solve a polynomial system of equations of a particular structure.  The system is defined over a finite field $\mathbb{F}_p$, and the number of variables in the system is linked to the bit-size of $p$.  GMV were well aware of standard methods for solving non-linear equation systems over finite fields, but were looking for techniques that could exploit the special structure of the system.  More than 100 participants took part in the competition, and the winners were announced in July 2025 \cite{win}.  In this paper we present the method we developed for tackling the GMV system.

Our strategy for solving the system of polynomial equations given in \cite{GMVproblem} is to take advantage of the particular structure of the system, which lends itself to the theory of resultants.  The idea is inspired by recent work in cryptanalysis where resultants have been used to solve the constrained input/constrained output (CICO) problem for some permutations over large finite fields more efficiently than using general solving methods \cite{res1,res2}.  In particular, the CICO problem in this setting also boils down to solving a structured multivariate polynomial system over a large prime field.

A resultant takes two polynomials $f$ and $g$ and a target variable $x_i$ as input and produces a new polynomial where the target variable has been eliminated: 
\[
\res(f(x_1,\ldots,x_n),g(x_1,\ldots,x_n);x_i)=h(x_1,\ldots,x_{i-1},x_{i+1},\ldots,x_n).
\]
One useful property is that the new polynomial will preserve the common roots of $f$ and $g$.  The resultant is computed as the determinant of the Sylvester matrix of $f$ and $g$ and is constructed as follows.  Let $\deg(f)=d$ and $\deg(g)=e$ and write $f$ and $g$ as 
\begin{align*}
    f = \alpha_dx_i^d + \alpha_{d-1}x_i^{d-1} + \ldots + \alpha_1x_i + \alpha_0\\
    g = \beta_ex_i^e + \beta_{e-1}x_i^{e-1} + \ldots + \beta_1x_i + \beta_0,\\
\end{align*}
where the $\alpha_j$ and $\beta_j$ are in $\mathbb{F}_p[x_1,\ldots,x_{i-1},x_{i+1},\ldots,x_n]$.  The resultant of $f$ and $g$ is computed as the $(d+e)\times (d+e)$ determinant
\begin{equation}
\res(f,g;x_i)=\left|\begin{array}{ccccccc}
     \alpha_d & \alpha_{d-1} & \cdots & \alpha_0 & 0 &  \cdots & 0\\
     0 & \alpha_d & \alpha_{d-1} & \cdots & \alpha_0  &  \cdots & 0\\     
     \ddots & \ddots & & & \ddots & \ddots\\
     0 & \cdots & 0 & \alpha_d & \alpha_{d-1} & \cdots & \alpha_0\\\
     \beta_e & \beta_{e-1} & \cdots & \beta_0 & 0 &  \cdots & 0\\
     0 & \beta_e & \beta_{e-1} & \cdots & \beta_0  &  \cdots & 0\\     
     \ddots & \ddots & & & \ddots & \ddots\\
     0 & \cdots & 0 & \beta_e & \beta_{e-1} & \cdots & \beta_0
\end{array}\right|.
\end{equation}

\section{The GMV Polynomial System}\label{sec:GMVPolys}
The system given in the GMV challenge is defined over $\mathbb{F}_p[x_0,\ldots,x_n]$ as follows:
\begin{equation}
\begin{array}{rl}
     f_0 = & b_1x_0 + x_1 + x_n = 0 \\
     f_1' = & a_0x_0^3 - (b_0x_0 + 2x_1)(a_1x_0^2 + x_1^2 + 1) = 0\\
     f_i = & x_i(x_{i-1}^2 - 1) + (a_ix_1 + b_ix_n)(2x_ix_{i-1} - x_{i-1}^2 + 1) + 2x_{i-1} = 0, \\
        & \mbox{for }2\leq i\leq n-1\\
     f_n = & t(x_{n-1}^2 + 2x_nx_{n-1} - 1) - x_n(x_{n-1}^2 - 1) + 2x_{n-1} = 0,
\end{array}  \label{eq:GMV}
\end{equation}
for constants $t,a_j,b_j\in \mathbb{F}_p$. The exact manner in which these constants are chosen is not given in \cite{GMVproblem}. We will instead, for the remainder of this document, assume that the constants are generic, and will in self-generated experiments sample them uniformly at random from $\mathbb{F}_p$.

Observe that the above polynomial system is particularly suitable for the use of resultants. Indeed, for $i=2,\ldots,n-1$ we have that the variable $x_i$ only appears in the two polynomials $f_i$ and $f_{i+1}$, and can hence be eliminated from the system by a single resultant computation. More formally, for the ideal $I = \langle f_0,f_1',f_2,\ldots f_n\rangle$, we have 
\begin{align*}
    I\cap \mathbb{F}_p[x_1,\ldots,x_{i-1},x_{i+1},\ldots x_n] = \langle f_0,f_1',f_2,\ldots,f_{i-1},g_i,f_{i+2}\ldots f_n\rangle,
\end{align*}
where $g_i = Res(f_i,f_{i+1};x_i)$. This process can now be continued using $x_{i-1}$ and the polynomials $f_{i-1}$ and $g_i$, and so on.  One item of notation used in the rest of this paper: Let the degree of a variable $x_i$ in any polynomial $f\in\mathbb{F}[x_1,\ldots,x_n]$ be denoted $\deg_{x_i}(f)$. 

\subsection{Initial Preparation}
\label{sec:initial}
We initialize and modify the given system slightly as follows.  First, use\\
$b_1x_0+x_1+x_n=0$ to eliminate $x_0$ from $f_1'$ and call the resulting polynomial for $f_1$:
\[
f_1(x_1,x_n) = f_1'(-b_1^{-1}(x_1+x_n),x_1).
\]
The polynomial $f_1$ then contains 
$$(-a_0/b_1^3 + a_1b_0/b_1^3 - 2a_1/b_1^2 + b_0/b_1-2)x_1^3$$
as a degree-3 term in $x_1$ and all other terms in $f_1$ have degree $\leq 2$ in $x_1$. In principle the coefficient above may evaluate to zero, but with random coefficients from a large field this is an unlikely event.
  
From $f_1 = 0$ we single out the pure $x_1^3$-term as $x_1^3 = r(x_1,x_n)$, where the $\deg_{x_1}(r)\leq 2$.  We can then perform all computations not involving $f_1$ in the ring $\mathbb{F}_p[x_1,\ldots,x_n]/\langle x_1^3 - r(x_1,x_n)\rangle$, ensuring that the degree of $x_1$ is always smaller than $3$ in the output of any computation.

\subsection{High-level Algorithm}
The high-level algorithm to solve the system is to eliminate the variables\\
$x_{n-1}, x_{n-2},\ldots,x_2$ using one resultant computation to eliminate each variable.  In the end we will be left with $f_1(x_1,x_n)$ and $g_2(x_1,x_n)$, where $g_2(x_1,x_n)$ is the final output of $n-2$ resultant computations.  We then do one final resultant computation $\res(f_1(x_1,x_n),g_2(x_1,x_n);x_1)=u(x_n)$ to give us a univariate polynomial $u(x_n)$ that is in the ideal $\langle f_1,f_2,\ldots,f_n\rangle$. The roots of $u(x_n)$ can now be found with an expected time complexity of $O(n\log^2 n\log pn)$ multiplications in $\mathbb{F}_p$ \cite[Corollary 14.16 \& Chapter 8]{ModernComputerAlgebra}. As already noted in \cite{GMVproblem}, it is straightforward to recover the full solution once a correct value for $x_n$ is known. 

\subsection{Example Case}
We illustrate how the basic solving algorithm works on the example given in the challenge, where $p=1523$ and $n=7$.  To better illustrate the procedure and what happens during computation we do not write out the polynomials $f_1,\ldots,f_7$, but represent them via their vector of degrees $\degDistr_i$ in each single variable.  That is, $\degDistr_i$ is given as 
\[
\degDistr_i = (\deg_{x_1}(f_i),\deg_{x_2}(f_i),\ldots,\deg_{x_7}(f_i)).
\]

The system after initialization but before any resultants have been computed is then given as
\[
\begin{array}{c|c}
    \mathrm{polynomial} & \degDistr_i=(\deg_{x_1},\ldots,\deg_{x_7})  \\
    \hline
     f_1 & (3,0,0,0,0,0,3)\\
     f_2 & (3,1,0,0,0,0,1)\\
     f_3 & (1,2,1,0,0,0,1)\\
     f_4 & (1,0,2,1,0,0,1)\\
     f_5 & (1,0,0,2,1,0,1)\\
     f_6 & (1,0,0,0,2,1,1)\\
     f_7 & (0,0,0,0,0,2,1)
\end{array}
\]

We start by taking the resultant of $f_6$ and $f_7$, eliminating $x_6$:
\[
g_6(x_1,x_5,x_7)=\res(f_6,f_7;x_6).
\]
Computing this resultant is done by computing a $3\times 3$ determinant.   Write $f_7$ as $f_7=\alpha_2x_6^2+\alpha_1x_6+\alpha_0$ where each $\alpha_i\in\mathbb{F}_p[x_7]$, and $f_6$ as $f_6=\beta_1x_6+\beta_0$, where $\beta_i\in\mathbb{F}_p[x_1,x_5,x_7]$.  Then
\[
g_6(x_1,x_5,x_7)=\left|\begin{array}{ccc}
    \beta_1 & \beta_0 & 0 \\
     0 & \beta_1 & \beta_0\\
     \alpha_2 & \alpha_1 & \alpha_0
\end{array}\right|=\alpha_2\beta_0^2 - \alpha_1\beta_1\beta_0+\alpha_0\beta_1^2.
\]
After this, the system looks like
\[
\begin{array}{c|c}
    pol & (\deg_{x_1},\ldots,\deg_{x_7})  \\
    \hline
     f_1 & (3,0,0,0,0,0,3)\\
     f_2 & (3,1,0,0,0,0,1)\\
     f_3 & (1,2,1,0,0,0,1)\\
     f_4 & (1,0,2,1,0,0,1)\\
     f_5 & (1,0,0,2,1,0,1)\\
     g_6 & (2,0,0,0,4,0,3)\\
\end{array}
\]
where we have reduced all polynomial multiplications modulo $x_1^3-r(x_1,x_n)$ to keep $\deg_{x_1}(g_6)\leq 2$.  We see that $\deg_{x_5}(g_6)=4$, the double of $\deg_{x_5}(f_6)$.  This is a general behavior of the system that we will prove below, and is the reason we end up with an overall complexity of the solving algorithm of $\mathcal{O}(2^n)$.  Likewise, we also have $\deg_{x_7}(g_6)=\deg_{x_5}(g_6)-1$.  We will also prove this relation holds in general. 

We continue by recursively computing $g_5=\res(f_5,g_6;x_5)$, down to $g_2=\res(f_2,g_3;x_2)$.  These determinants are no longer $3\times 3$ determinants, but they are sparse since $\deg_{x_i}(f_i)=1$ when $x_i$ is to be eliminated.  For each resultant, we have\footnote{The polynomials $\beta_0$, $\beta_1$ and $\alpha_j$ depend on $i$, but in order to keep the notation simpler we omit $i$ when there is no danger of confusion.} $f_i=\beta_1x_i+\beta_0$ and $g_{i+1}=\alpha_{2^{7-i}}x_i^{2^{7-i}}+\alpha_{2^{7-1}-1}x_i^{2^{7-i}-1}+\ldots+\alpha_1x_i+\alpha_0$, where $\alpha_j,\beta_j\in\mathbb{F}_p[x_1,x_{i-1},x_7]$. 
Then the $(2^{7-i}+1)\times (2^{7-i}+1)$ determinant to be computed is
\begin{equation}\label{eq:sparseRes}
\res(f_i,g_{i+1};x_i)=\left|\begin{array}{cccccc}
    \beta_1 & \beta_0 & 0 & 0 & \cdots & 0\\
     0 & \beta_1 & \beta_0 & 0 & \cdots & 0\\
     \vdots & \vdots & &\ddots & & \vdots\\
     0 & 0 & \cdots & 0 & \beta_1 & \beta_0\\\
     \alpha_{2^{7-i}} & \alpha_{2^{7-i}-1} & \alpha_{2^{7-i}-2} & \cdots & \alpha_1 & \alpha_0
\end{array}\right|.
\end{equation}
We prove later that these resultants can be efficiently computed due to their sparsity.  After successively computing the resultants $g_i=\res(f_i,g_{i+1};x_i)$ for $i=5,4,3,2$ we are left with the following system, where $x_1^3=r$ has been continuously used to keep $\deg_{x_1}\leq 2$.
\[
\begin{array}{c|l}
    pol & (\deg_{x_1},\ldots,\deg_{x_7})  \\
    \hline
     f_1 & (3,0,0,0,0,0,3)\\
     g_2 & (2,0,0,0,0,0,127)\\
\end{array}
\]
The degree $\deg_{x_7}(g_2)$ becomes $127$ and not $63$ since $f_2$ is a special case with $\deg_{x_1}(f_2)=3$, where the $f_i$ for $i>2$ have $\deg_{x_1}(f_i)=2$.  The reductions with $r(x_1,x_7)$ will therefore double the degree in $x_7$ again, to keep $\deg_{x_1}(g_2)\leq2$.  We now need to do one final resultant between $f_1$ and $g_2$ to eliminate $x_1$ and find $u(x_7)$.  This resultant is not of the sparse kind in (\ref{eq:sparseRes}), but it will only be a $5\times 5$ determinant since $\deg_{x_1}(f_1)=3$ and $\deg_{x_1}(g_2)=2$.  Computing this final resultant we then find the, in this case, unique root $x_7=180$ of $u(x_7)$. 

Note that the order in which we compute the resultants matters a lot for the complexity. This is most easily demonstrated by looking at the resulting tree structure, as we can see in Fig.  \ref{fig:Unbalanced} and Fig. \ref{fig:Balanced}.

\section{Behavior of Resultant Solving Method on General Systems}
\label{sec:BSMGS}

In this section we show and prove some useful properties of the resultants that apply to the particular system in (\ref{eq:GMV}).  This section ends with showing that the univariate polynomial $u(x_n)$ in the ideal $\langle f_1,\ldots,f_n\rangle$ has degree $3(2^n-1)$.

\begin{proposition}\label{prop:sparse}

The general determinant  like the one in (\ref{eq:sparseRes}) is given by 
\begin{equation}
      \det(\alpha,\beta;h):=\left|\begin{array}{ccccc}
    \beta_1 & \beta_0 & 0 &  \cdots & 0\\
     \ddots & \ddots & & & \ddots\\
     0 & 0 & \cdots & \beta_1 & \beta_0\\\
     \alpha_{h} & \alpha_{h-1} & \cdots & \alpha_1 & \alpha_0
\end{array}\right|= \sum_{i=0}^h (-1)^i \alpha_i\beta_0^i\beta_1^{h-i}.
        \label{eq:computeRes}
    \end{equation} 
\end{proposition}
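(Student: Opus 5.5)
The matrix in (\ref{eq:computeRes}) is $(h+1)\times(h+1)$: its first $h$ rows are shifted copies of $(\beta_1,\beta_0)$ and its last row is $(\alpha_h,\ldots,\alpha_0)$. The cleanest route is a Laplace (cofactor) expansion along the last row, which writes the determinant as
\[
\sum_{j=1}^{h+1} (-1)^{(h+1)+j}\,\alpha_{h+1-j}\,M_j ,
\]
where $M_j$ is the minor obtained by deleting row $h+1$ and column $j$. With the reindexing $i = h+1-j$, the entry $\alpha_i$ sits in column $j=h+1-i$. The sign is then $(-1)^{2h+2-i}=(-1)^i$, so the claimed formula follows once we show $M_{h+1-i}=\beta_0^{\,i}\beta_1^{\,h-i}$.

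The key step is computing this minor. Deleting column $j$ from the upper $h\times(h+1)$ bidiagonal block leaves an $h\times h$ matrix that splits into two blocks.

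\begin{itemize}
\item The first $j-1$ rows and columns form an upper-triangular block with $\beta_1$ on the diagonal and $\beta_0$ on the superdiagonal.
\item The remaining $h-j+1$ rows occupy columns $j+1,\ldots,h+1$ and form a lower-triangular block with $\beta_0$ on the diagonal.
\end{itemize}

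Row $r\le j-1$ has its nonzero entries in columns $r,r+1\le j$. After removing column $j$, the only possible entry of such a row in the lower-right region is in column $j$ itself, which is gone. Similarly, rows $r\ge j$ have entries only in columns $\ge j$. Hence the matrix is block upper triangular, and its determinant is the product of the diagonal entries:
\[
\beta_1^{\,j-1}\beta_0^{\,h-j+1}=\beta_1^{\,h-i}\beta_0^{\,i}.
\]

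The only real obstacle is this bookkeeping: checking that the off-diagonal block really vanishes, and tracking the sign and index conventions between $j$ and $i$. It is worth sanity-checking against the $3\times 3$ example already computed for $g_6$, where $h=2$ gives $\alpha_2\beta_0^2-\alpha_1\beta_1\beta_0+\alpha_0\beta_1^2$, matching the formula.

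An alternative proof, if one prefers to avoid the minor computation, uses the standard identity $\res(\beta_1x+\beta_0,g;x)=\beta_1^{h}\,g(-\beta_0/\beta_1)$, up to the sign convention fixed by the row order of the matrix. Expanding $g(-\beta_0/\beta_1)$ gives $\sum_{i=0}^h\alpha_i(-1)^i\beta_0^i\beta_1^{-i}$, so $\beta_1^h g(-\beta_0/\beta_1)=\sum_{i=0}^h(-1)^i\alpha_i\beta_0^i\beta_1^{h-i}$. This identity is valid over the fraction field when $\beta_1\neq0$, and extends to all $\beta_1$ as a polynomial identity.

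There is also a row-reduction version of the same idea. Add $\beta_1^{-1}$-scaled multiples of the earlier rows to the last row to clear its entries left to right; the final entry becomes $g(-\beta_0/\beta_1)$, and multiplying by the diagonal factor $\beta_1^{h}$ gives the formula. I would present the cofactor argument as the main proof, since it works directly over the polynomial ring without division.
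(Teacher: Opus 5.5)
Your argument is correct, but it is organized differently from the paper's. The paper inducts on $h$ and expands along the first column. The minor at position $(1,1)$ is $\det(\alpha,\beta;h-1)$, and the minor at position $(h+1,1)$ is lower bidiagonal with determinant $\beta_0^h$. So each step multiplies the previous sum by $\beta_1$ and adds one new term $(-1)^h\alpha_h\beta_0^h$. You instead expand once along the last row and evaluate every minor $M_j$ explicitly. Your own bookkeeping shows that both off-diagonal blocks vanish, so $M_j$ is in fact block diagonal: an upper-bidiagonal $\beta_1$-block of size $j-1$ next to a lower-bidiagonal $\beta_0$-block of size $h-j+1$. The paper's second minor is exactly your case $j=1$. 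The inductive route needs only two minors, one of which has the same shape as the original, so the index tracking is minimal. Your route avoids induction and produces each term $(-1)^i\alpha_i\beta_0^i\beta_1^{h-i}$ with its sign in one step, at the cost of the general block and sign computation you carried out. That computation checks out, including $(-1)^{2h+2-i}=(-1)^i$. Your alternative via $\beta_1^h g(-\beta_0/\beta_1)$, with $g=\sum_i\alpha_i x^i$, is the most conceptual of the three: it explains the formula as evaluating $g$ at the root of the linear polynomial. It does, however, rely on the product formula for resultants (or your row reduction), plus an extension from $\beta_1\neq 0$ to a polynomial identity. Also, with the $\beta$-rows on top as in the statement, there is no sign ambiguity: the identity holds exactly.
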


{\it Proof:} We prove the statement by induction on $h$.  For $h=1$ the formula is clearly true since
\[
\left|\begin{array}{cc}
     \beta_1 & \beta_0 \\
     \alpha_1 & \alpha_0
\end{array}\right|=\alpha_0\beta_1 - \alpha_1\beta_0 = \sum_{i=0}^1 (-1)^i \alpha_i\beta_0^i\beta_1^{1-i}.
\]
Assume $\det(\alpha,\beta;h-1)=\sum_{i=0}^{h-1} (-1)^i \alpha_i\beta_0^i\beta_1^{h-1-i}$ (induction hypothesis). Then $\det(\alpha,\beta;h)$ can be computed using co-factor expansion along the leftmost column as

\begin{eqnarray*}
\det(\alpha,\beta;h)&=&\beta_1\cdot\left|\begin{array}{ccccc}
    \beta_1 & \beta_0 & 0 &  \cdots & 0\\
     \ddots & \ddots & & & \ddots\\
     0 & 0 & \cdots & \beta_1 & \beta_0\\\
     \alpha_{h-1} & \alpha_{h-2} & \cdots & \alpha_1 & \alpha_0
\end{array}\right|\\
&+& (-1)^h\alpha_h\cdot \left|\begin{array}{ccccc}
    \beta_0 & 0 & 0 &  \cdots & 0\\
    \beta_1 & \beta_0 & 0 & \cdots & 0\\
     \ddots & \ddots & & & \ddots\\
     0 & 0 & \cdots & \beta_1 & \beta_0
\end{array}\right|
\end{eqnarray*}

The first determinant is $\det(\alpha,\beta;h-1)$, while the second determinant is equal to $\beta_0^h$ since it is a diagonal $(h\times h)$ matrix with $\beta_0$ on the diagonal.  Using the induction hypothesis the expression therefore resolves to 

\begin{eqnarray*}
\beta_1\sum_{i=0}^{h-1}  (-1)^i \alpha_i\beta_0^i\beta_1^{h-1-i} + (-1)^h \alpha_h\beta_0^h &=& 
\sum_{i=0}^{h-1}  (-1)^i \alpha_i\beta_0^i\beta_1^{h-i} + (-1)^h \alpha_h\beta_0^h\\ &=& \sum_{i=0}^{h} (-1)^i \alpha_i\beta_0^i\beta_1^{h-i}
\end{eqnarray*}
as desired. \qed

\begin{corollary}
The resultant $g_i=\res(f_i,g_{i+1};x_i)$ can be computed with $4h$ polynomial multiplications, where $h$ is the degree of $x_i$ in the polynomial $g_{i+1}$ 
\end{corollary}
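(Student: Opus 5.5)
We will derive the count directly from the closed formula of Proposition~\ref{prop:sparse}. Since $\deg_{x_i}(f_i)=1$, we write $f_i=\beta_1x_i+\beta_0$, and we write $g_{i+1}=\sum_{j=0}^h\alpha_jx_i^j$ with $h=\deg_{x_i}(g_{i+1})$. The resultant $\res(f_i,g_{i+1};x_i)$ is then the determinant $\det(\alpha,\beta;h)$ of (\ref{eq:sparseRes}), so
\[
g_i=\sum_{j=0}^h(-1)^j\alpha_j\beta_0^j\beta_1^{h-j}.
\]
It therefore suffices to count the polynomial multiplications needed to evaluate this sum. We count additions and sign changes as free, in line with the corollary's focus on multiplications.

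The plan is to precompute the powers and then form the terms. First compute $\beta_0^2,\ldots,\beta_0^h$ by repeated multiplication, which takes $h-1$ products. Likewise compute $\beta_1^2,\ldots,\beta_1^h$ in $h-1$ products. Then, for each $j=0,\ldots,h$, form $\alpha_j\cdot\beta_0^j\cdot\beta_1^{h-j}$. The two end terms ($j=0$ and $j=h$) need one product each. Each of the $h-1$ middle terms needs two products.

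The total is then
\[
(h-1)+(h-1)+2+2(h-1)=4h-2\le 4h
\]
products, which gives the bound. As an alternative, a Horner-type evaluation $g_i=\bigl(\cdots(\alpha_h(-\beta_0)+\alpha_{h-1}\beta_1)(-\beta_0)+\cdots\bigr)$ requires first multiplying each $\alpha_j$ by the appropriate power of $\beta_1$. This avoids storing all the powers and gives a similar count of at most about $3h$ products. We would mention it as a remark.

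No step is a real obstacle, since everything follows from Proposition~\ref{prop:sparse}. The one point to state carefully is what counts as a ``polynomial multiplication'': all products are taken in $\mathbb{F}_p[x_1,x_{i-1},x_n]/\langle x_1^3-r\rangle$, so each reduction by $x_1^3-r(x_1,x_n)$ is counted as part of the multiplication rather than as a separate step. The actual cost of each multiplication, which grows with the degrees in $x_{i-1}$ and $x_n$, is not part of this statement.
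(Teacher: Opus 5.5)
Your proof is correct and follows the same route as the paper: compute $\beta_0^2,\ldots,\beta_0^h$ and $\beta_1^2,\ldots,\beta_1^h$ with $h-1$ products each, then form the terms of the sum in Proposition~\ref{prop:sparse}. The only difference is that you observe the end terms $j=0$ and $j=h$ need just one product, giving $4h-2\le 4h$, whereas the paper charges two products to each of the $h+1$ terms and obtains exactly $2(h-1)+2(h+1)=4h$.
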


{\it Proof:} The sequences of polynomials $\beta_0^2,\ldots,\beta_0^h$ and $\beta_1^2,\ldots,\beta_1^h$ can be computed with $h-1$ multiplications each.  Each term in the sum in Proposition \ref{prop:sparse} can then be computed with $2$ multiplications, so the whole sum can be computed with $2(h+1)$ multiplications.  In total the computation of $\res(f_i,g_{i+1};x_i)$ takes $2(h-1) + 2(h+1) = 4h$ multiplications. \qed

\begin{proposition}\label{prop:elimdeg}
Let $a\geq 3$ and assume the variables $x_{n-1},\ldots,x_a$ have been eliminated using resultants, producing the polynomials $g_n=f_n,g_{n-1},\ldots,g_{a+1}$, where $g_i=\res(f_i,g_{i+1};x_i)$ for $3\leq i\leq a$.  Then $\deg_{x_{a-1}}(g_a)=2^{n-a+1}$ and $\deg_{x_n}(g_a)=2^{n-a+1}-1$.

\end{proposition}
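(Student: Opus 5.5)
We argue by downward induction on $a$, starting at $a=n$ (where $g_n=f_n$) and using Proposition~\ref{prop:sparse} at each step. The genericity assumption is used only to rule out accidental cancellation of leading coefficients. The statement for $g_a$ should be read as: $\deg_{x_{a-1}}(g_a)=2^{n-a+1}$ and $\deg_{x_n}(g_a)=2^{n-a+1}-1$.

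\textbf{Base case $a=n$.} Here $g_n=f_n$. The base case as literally stated does not hold for $a=n$: $\deg_{x_{n-1}}(f_n)=2$, but $\deg_{x_n}(f_n)=1=2^1-1$. To make the exponent pattern come out right, we instead start at $a=n-1$ with $g_{n-1}=\res(f_{n-1},f_n;x_{n-1})$. This case was worked out explicitly in the example (there for $n=7$, $g_6$). Write $f_n=\alpha_2x_{n-1}^2+\alpha_1x_{n-1}+\alpha_0$, where $\alpha_2=t-x_n$, $\alpha_1=2tx_n+2$, $\alpha_0=-t+x_n$. Each $\alpha_j$ has degree at most $1$ in $x_n$. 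Write $f_{n-1}=\beta_1x_{n-1}+\beta_0$, where $\beta_1=x_{n-2}^2-1+2(a_{n-1}x_1+b_{n-1}x_n)x_{n-2}$ and $\beta_0=-x_{n-2}+(a_{n-1}x_1+b_{n-1}x_n)(1-x_{n-2}^2)$. Then
\[
g_{n-1}=\alpha_2\beta_0^2-\alpha_1\beta_0\beta_1+\alpha_0\beta_1^2 .
\]
\begin{itemize}
\item The $x_{n-2}$-degree is at most $4$. The $x_{n-2}^4$ coefficient is $\alpha_2\ell^2+\alpha_1\ell+\alpha_0$ with $\ell=a_{n-1}x_1+b_{n-1}x_n$, which is generically nonzero.
\item For the $x_n$-degree, $\beta_0$ and $\beta_1$ have degree $1$ in $x_n$ and the $\alpha_j$ have degree $1$, so the naive bound is $3$, matching $2^2-1$.
\end{itemize}
The reduction modulo $x_1^3-r$ must be tracked as well. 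Since $\deg_{x_1}\le 2$ already holds after expansion, no reduction is needed here and the bound stays $3$. Generic nonvanishing of the $x_n^3$ term is checked by a specialization of the constants.

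\textbf{Inductive step.} Assume $\deg_{x_{a}}(g_{a+1})=h=2^{n-a}$ and $\deg_{x_n}(g_{a+1})\le h-1$, for $3\le a$. Apply Proposition~\ref{prop:sparse} with $f_a=\beta_1x_a+\beta_0$, where
\[
\beta_1=x_{a-1}^2-1+2\ell x_{a-1},\qquad \beta_0=2x_{a-1}+\ell(1-x_{a-1}^2),\qquad \ell=a_ax_1+b_ax_n .
\]
Then
\[
g_a=\sum_{i=0}^h(-1)^i\alpha_i\beta_0^i\beta_1^{h-i},
\]
where the $\alpha_i$ do not involve $x_{a-1}$. The key facts are:
\begin{itemize}
\item $\beta_0,\beta_1$ have degree $2$ in $x_{a-1}$, so each term has $x_{a-1}$-degree at most $2h=2^{n-a+1}$. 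The top coefficient is $\sum_i(-1)^i\alpha_i(-\ell)^i$, i.e.\ $g_{a+1}$ evaluated at $x_a=\ell$ up to sign. This is nonzero for generic constants, so the degree is exactly $2h$.
\item $\beta_0,\beta_1$ have degree $1$ in $x_n$, so $\beta_0^i\beta_1^{h-i}$ has $x_n$-degree at most $h$. We need the total $x_n$-degree to be at most $2h-1$, and this fails naively: $\alpha_i$ can have $x_n$-degree $h-1$, and $h-1+h=2h-1$. So the naive count already gives $\le 2h-1$, which is exactly the target.
\end{itemize}

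\textbf{Main obstacle.} The naive bound is tight only if reduction modulo $x_1^3-r(x_1,x_n)$ does not increase the $x_n$-degree, and this is where I expect the difficulty. The products carry $x_1$-degree up to $2+h$, coming from $\ell$ in each $\beta$, and reducing these via $r$ (which has $x_n$-degree up to $3$) could raise the $x_n$-degree. The fix I would try is a weighted grading with $\deg x_1=\deg x_n=1$. Check that $f_1$ is homogeneous-leading of total degree $3$ in $(x_1,x_n)$: its cubic part involves $(x_1+x_n)^3$, $x_1(x_1+x_n)^2$, and so on. Then $r$ has total $(x_1,x_n)$-degree $\le 3$, so reduction preserves the total degree in $(x_1,x_n)$.

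We therefore strengthen the inductive hypothesis to: the total degree of $g_{a+1}$ in $(x_1,x_n)$ is at most $h-1$, with $x_1$-degree at most $2$ after reduction. Each $\beta$ has total $(x_1,x_n)$-degree $1$, so $g_a$ has total $(x_1,x_n)$-degree at most $(h-1)+h=2h-1$. Reduction preserves this bound, and hence $\deg_{x_n}(g_a)\le 2h-1$. Equality is shown generically via a specialization in which the $x_n^{2h-1}$ coefficient survives. The base case above satisfies the strengthened hypothesis, since the $\alpha_j$ and $\beta_j$ there have total $(x_1,x_n)$-degree $1$.

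This degree-preservation also explains why $g_2$ jumps to $2^{n}-1$. In $f_2$, the variable $x_1$ enters with degree $3$ through a non-homogeneous-in-$\ell$ structure, so the strengthened bound breaks down there. That is consistent with the proposition restricting to $a\ge 3$.
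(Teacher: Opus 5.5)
Your proof follows the same skeleton as the paper's: downward induction on $a$, Proposition~\ref{prop:sparse} with $h=2^{n-a}$, and a term-by-term degree count of $\alpha_i\beta_0^i\beta_1^{h-i}$. The difference is that you carry a stronger invariant, and this is an improvement.

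The paper tracks $\deg_{x_n}$ directly via $(2^{n-a}-1)+i+(2^{n-a}-i)$. Its proof never mentions the reduction modulo $x_1^3-r(x_1,x_n)$. Yet for $a\le n-2$ the products have $x_1$-degree up to $2+h$, and each reduction step trades $x_1^3$ for terms such as $x_n^3$. The joint degree only appears afterwards, in Remark~\ref{rem1}, and is first used in Proposition~\ref{g2}. You put $\deg_{x_1,x_n}(g_{a+1})\le h-1$ into the hypothesis and use $\deg_{x_1,x_n}(r)\le 3$. This gives a bound on $\deg_{x_n}(g_a)$ that holds for the reduced polynomials the algorithm actually computes, and it yields the upper-bound half of Remark~\ref{rem1} for free.

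Likewise, your identification of the leading $x_{a-1}$-coefficient as $g_{a+1}(x_1,\ell,x_n)$ is sharper than the paper's per-term count, which silently assumes no cancellation. Neither version, however, proves the equalities. The nonvanishing of $g_{a+1}(x_1,\ell,x_n)$ modulo $x_1^3-r$, and the survival of the $x_n^{2h-1}$ coefficient after reduction, are asserted by genericity without an executed specialization. This is the same level of rigor as the paper.

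Two small corrections. First, your claim that the base case $a=n$ fails is wrong. We have $\deg_{x_{n-1}}(f_n)=2=2^{1}$, $\deg_{x_n}(f_n)=1=2^{1}-1$, and $\deg_{x_1,x_n}(f_n)=1$, so even your strengthened invariant holds at $a=n$. The detour through $g_{n-1}$ is therefore unnecessary, though harmless. Second, in that detour $\beta_0$ should be $2x_{n-2}+\ell(1-x_{n-2}^2)$, as you correctly write in the inductive step; this does not affect any degree.
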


{\it Proof:} We prove the statement by induction on $a$.  Induction basis: for $a=n$ (that is, no variables have been eliminated yet), the statement is true since the system is constructed to have $\deg_{x_{n-1}}(f_n)=2=2^{n-n+1}$ and $\deg_{x_n}(f_n)=1=2^{n-n+1}-1$.

Induction step: assume the variables $x_{n-1},\ldots,x_{a+1}$ have been eliminated and $\deg_{x_a}(g_{a+1})=2^{n-a}$ and that $\deg_{x_n}(g_{a+1})=2^{n-a}-1$.  Since $\deg_{x_a}(g_{a+1})=2^{n-a}$ the resultant computation can be done as in Proposition \ref{prop:sparse} with $h=2^{n-a}$.  Furthermore, $\deg_{x_{a-1}}(g_{a+1})=0$, since the variable $x_{a-1}$ has not been involved in any resultant eliminating $x_{n-1},\ldots,x_{a+1}$.  By construction of the initial system, we see that when writing $f_a$ as $f_a=\beta_1x_a+\beta_0$ we have $\deg_{x_{a-1}}(\beta_i)=2$ and $\deg_{x_n}(\beta_i)=1$ for $i=1,2$.  We calculate the degree of each term in $x_{a-1}$ and $x_n$ in the sum in Proposition \ref{prop:sparse}:
\begin{align*}
\deg_{x_{a-1}}(\alpha_i\beta_0^i\beta_1^{2^{n-a}-i}) & = 0+2i+2(2^{n-a}-i)= 2^{n-a+1}\\
\deg_{x_n}(\alpha_i\beta_0^i\beta_1^{2^{n-a}-i}) & = (2^{n-a}-1)+i+ (2^{n-a}-i)=2^{n-a+1}-1.
\end{align*}
This proves the claim by induction. \qed

We can successively eliminate one variable at the time starting from $x_{n-1}$ and ending with $x_3$, by computing $g_i=\res(f_i,g_{i+1};x_i)$ for $i=n-1, \ldots, 3$.  After $x_3$ has been eliminated we end up with a system of three polynomials $f_1,f_2,g_3$ in the variables $x_1,x_2,x_n$.  By Proposition \ref{prop:elimdeg} this system has the following degrees in each variable:
\begin{equation}\label{tab:3sys}
\begin{array}{c|c}
    pol & (\deg_{x_1},\ldots,\deg_{x_n})  \\
    \hline
     f_1 & (3,0,0,\ldots,0,3)\\
     f_2 & (3,1,0,\ldots,0,1)\\
     g_3 & (2,2^{n-2},0,\ldots,0,2^{n-2}-1).
\end{array}
\end{equation}
The degree of the next resultant cannot be computed from Proposition \ref{prop:elimdeg} since $\deg_{x_1}(f_2)=3$.

\begin{remark}\label{rem1}
Let us define $\deg_{x_1,x_n}(f)$ as the joint degree of $f$ in variables $x_1$ and $x_n$, that is to say, the total degree of $f$ if we substitute all variables $x_i$, $i\neq 1,n$ by a constant.  For instance, $\deg_{x_1,x_n}(x_1^kx_n^k)=2k$ but $\deg_{x_1,x_n}((x_1+x_n)^k)=k$, even when in both cases the degrees in $x_1$ and in $x_n$ are both $k$.  In a similar way as it has been done in Proposition \ref{prop:elimdeg}, by induction and careful computation of the degrees of each of the $\alpha_i$ and $\beta_i$ terms, specially taking into account that $\deg_{x_1,x_n}\beta_i=1$ for all the $\beta$ terms in all the $f_a$ for  $a=n-1,\ldots, 3$,   we can deduce that $\deg_{x_1,x_n}(g_a)=2^{n-a+1}-1$. 
\end{remark}

\begin{proposition}\label{g2}
Let $g_2=\res(f_2,g_3;x_2)$ where the degree-profile is like in (\ref{tab:3sys}).  Then $\deg_{x_n}(g_2)=2^n-1$. 
\end{proposition}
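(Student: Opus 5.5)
We compute $g_2$ with Proposition \ref{prop:sparse}, taking $h=2^{n-2}=\deg_{x_2}(g_3)$, and then track the $x_n$-degree through the reduction modulo $x_1^3-r(x_1,x_n)$. Write $f_2=\beta_1x_2+\beta_0$ with $\beta_0,\beta_1\in\mathbb{F}_p[x_1,x_n]$, and $g_3=\sum_{i=0}^{h}\alpha_ix_2^i$ with $\alpha_i\in\mathbb{F}_p[x_1,x_n]$. By Proposition \ref{prop:sparse},
\[
g_2=\sum_{i=0}^{h}(-1)^i\alpha_i\beta_0^i\beta_1^{h-i}.
\]

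\textbf{Degrees before reduction.} From the definition of $f_2$ (with $x_{i-1}=x_1$), the $\beta_j$ are affine-linear in $x_n$ and have $x_1$-degree at most $3$. Their joint degree in $(x_1,x_n)$ is $3$, and the $x_n$ always enters through a factor $(a_2x_1+b_2x_n)$ times something of $x_1$-degree at most $2$. Remark \ref{rem1} gives $\deg_{x_1,x_n}(\alpha_i)\le 2^{n-2}-1$, with $\deg_{x_1}\alpha_i\le 2$ after reduction. Each term $\alpha_i\beta_0^i\beta_1^{h-i}$ then has $x_n$-degree at most $(2^{n-2}-1)+2^{n-2}$ and $x_1$-degree up to $2+3\cdot2^{n-2}$.

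\textbf{Reduction in $x_1$.} Multiplications are reduced modulo $x_1^3-r(x_1,x_n)$, where $\deg_{x_n}(r)=3$ and, more importantly, $\deg_{x_1,x_n}(r)=3$. Every excess power of $x_1$ is traded for $x_n$ at joint-degree cost zero, so \emph{joint} degree is the right invariant. The plan is to prove the upper bound $\deg_{x_n}(g_2)\le 2^n-1$ by computing the joint degree of each unreduced term. We have $\deg_{x_1,x_n}(\beta_j)=3$, so each term has joint degree at most $(2^{n-2}-1)+3\cdot 2^{n-2}=2^n-1$. Reduction by $r$ preserves joint degree, since $r$ has joint degree $3$ and replaces $x_1^3$. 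The reduced $g_2$ therefore has joint degree, and hence $x_n$-degree, at most $2^n-1$.

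\textbf{Lower bound (main obstacle).} The hard step is showing that the top-degree part does not cancel. I would work with homogeneous leading forms: let $\bar\beta_j,\bar\alpha_i,\bar r$ be the top joint-degree components. The leading form of $g_2$ is then $\sum(-1)^i\bar\alpha_i\bar\beta_0^i\bar\beta_1^{h-i}$ reduced modulo $x_1^3-\bar r$. Dehomogenising by $x_n=1$, the $x_n^{2^n-1}$ coefficient is the value of a univariate expression in $x_1$ modulo the cubic $x_1^3-\bar r(x_1,1)$. First I would show this coefficient is a nonzero polynomial in the parameters $a_j,b_j,t$. It then vanishes only on a proper subvariety, so it is nonzero for generic constants, which is the standing assumption in the paper. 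To witness nonvanishing, I would specialise parameters cleverly, e.g.\ choose them so that the $\bar\beta_j$ become monomials and $\bar\alpha_i$ is controlled by the same leading-form recursion as in Proposition \ref{prop:elimdeg}. Failing that, I would check a single explicit instance, such as the computed $n=7$ example giving degree $127$, and invoke genericity via Schwartz--Zippel style reasoning.
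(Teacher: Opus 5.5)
Your upper-bound argument is the paper's proof. It uses the same split $f_2=\beta_1x_2+\beta_0$, Proposition~\ref{prop:sparse} with $h=2^{n-2}$, the bound $\deg_{x_1,x_n}(\alpha_i)\le 2^{n-2}-1$ from Remark~\ref{rem1}, and the fact that reducing modulo $x_1^3-r$ cannot raise the joint degree because $\deg_{x_1,x_n}(r)=3$.

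One inaccuracy: $\beta_1=(1+2a_2)x_1^2+2b_2x_1x_n-1$ has joint degree $2$, not $3$. This is harmless for the upper bound but matters below.

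Note also that the paper's proof ends at $\deg_{x_n}(g_2)\le 2^n-1$ and never excludes cancellation. Equality is tacitly left to the genericity assumption (cf.\ the comment after Proposition~\ref{prop:degu}). So you are right that equality needs an extra argument, and you omit nothing the paper supplies.

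Your lower-bound paragraph, however, is only a plan, and its fallback cannot give the statement. Checking the $n=7$ instance shows nonvanishing for that single $n$ (and that single prime), while the proposition is claimed for every $n$. You need an argument uniform in $n$.

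It also helps to state the target precisely. With $\deg_{x_1,x_n}(\beta_1)=2$, the $i$-th term has joint degree at most $2^{n-2}+2^{n-1}+i-1$, which is strictly below $2^n-1$ unless $i=h$. So your sum of leading forms collapses: the degree-$(2^n-1)$ part of $g_2$ before reduction is $\bar\alpha_h\bigl(x_1^2(a_2x_1+b_2x_n)\bigr)^h$. Here $\bar\alpha_h$ is the degree-$(2^{n-2}-1)$ part of $\alpha_h$, the leading $x_2$-coefficient of $g_3$.

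Since the joint degree is at most $2^n-1$, only the pure monomial $x_n^{2^n-1}$ can realise $x_n$-degree $2^n-1$. You must therefore prove two things, and neither is addressed:
\begin{itemize}
\item[(a)] $\bar\alpha_h\neq 0$. Remark~\ref{rem1} bounds the joint degree of $g_3$ as a whole and says nothing about which coefficient attains it. This requires propagating top forms through all earlier eliminations and reductions.
\item[(b)] The constant term of $\bar\alpha_h(x_1,1)\bigl(x_1^2(a_2x_1+b_2)\bigr)^h$ modulo $x_1^3-\bar r(x_1,1)$ is nonzero.
\end{itemize}

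Finally, Schwartz--Zippel turns ``nonzero polynomial in the $a_j,b_j$'' into ``nonzero for random constants'' only if two conditions hold. The polynomial must be nonzero over $\mathbb{F}_p$, and its degree must be small relative to $p$. Here the parameter degree can be of order $2^n$; already $\bar\beta_0^h$ has degree $h=2^{n-2}$ in $a_2,b_2$. That makes the bound vacuous in the regime $n\approx\log_2 p$.
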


{\it Proof:} 
As mentioned, this case is a bit different to the previous ones because $\deg_{x_1}(f_2)=3$, but we  will split, as before, 
\[f_2=\beta_1x_2+\beta_0,\;\; {\rm and}\;\; g_{3}=\displaystyle\sum_{i=0}^{2^{n-2}}\alpha_i x_2^i\] 
with $\alpha_i, \beta_j\in\mathbb{F}_p[x_1,x_n]$.  A careful study shows that $\deg_{x_1}(\beta_1)=2$ and $\deg_{x_n}(\beta_1)=1$ while $\deg_{x_1, x_n}(\beta_1)=2$.  Also  $\deg_{x_1}(\beta_0)=3$ and $ \deg_{x_n}(\beta_0)=1$ while $\deg_{x_1, x_n}(\beta_0)=3$.  Finally, using the reasoning explained in Remark \ref{rem1}, $\deg_{x_1, x_n}(\alpha_i)\leq 2^{n-2}$, for all $i=0\ldots, 2^{n-2}$.

Therefore 
\[\deg_{x_1,x_n}(\alpha_i\beta_0^i\beta_1^{2^{n-2}-i})\leq 2^{n-2}-1+3i+2(2^{n-2}-i)=2^{n-2}+2^{n-1}+i-1.\]
Since $i\leq 2^{n-2}$, we have
\[\deg_{x_1,x_n}(\alpha_i\beta_0^i\beta_1^{2^{n-2}-i})\leq 2^{n-2}+2^{n-1}+2^{n-2}-1=2^n-1.\]

This shows that the combined degree of $g_2$ in variables $x_1$ and $x_n$ is at most $2^n-1$. If we then apply the reduction polynomial $r(x_1, x_n)$ to reduce the degree in $x_1$, the combined degree in $x_1$ and $x_n$ will not change because $\deg_{x_1,x_n}(r(x_1,x_n))=3=\deg_{x_1,x_n}(x_1^3)$ . 

After the reduction we still have $\deg_{x_1, x_n}(g_2)\leq 2^n-1$ and this implies $\deg_{x_n}(g_2)\leq 2^n-1$ and the proof is complete. 

\qed

\medskip

The final resultant to compute before finding the univariate polynomial in $\langle f_1,\ldots,f_n\rangle$ is $\res(f_1,g_2;x_1)$.  This resultant is not of the sparse kind since $\deg_{x_1}(f_1)=3$ and $\deg_{x_1}(g_2)=2$, while Proposition \ref{prop:sparse} only applies when one of the polynomials is linear in the variable to be eliminated.  However, since the $x_1$-degree for $f_1$ and $g_2$ is only $2$ and $3$, this last resultant will only be the determinant of a $5\times 5$ matrix.  Let $f_1=\beta_3x_1^3+\beta_2x_1^2+\beta_1x_1+\beta_0$ and $g_2=\alpha_2x_1^2+\alpha_1x_1+\alpha_0$.  Then the determinant has the following form:
\begin{equation}\label{eq:5x5}
u(x_n)=\res(f_1,g_2;x_1)=\left|\begin{array}{ccccc}
    \beta_3 & \beta_2 & \beta_1 & \beta_0 & 0\\
    0 & \beta_3 & \beta_2 & \beta_1 & \beta_0\\
    \alpha_2 & \alpha_1 & \alpha_0 & 0 & 0\\
    0 & \alpha_2 & \alpha_1 & \alpha_0 & 0\\
    0 & 0 & \alpha_2 & \alpha_1 & \alpha_0
\end{array}\right|
\end{equation}
By inspection of the decomposition of $f_1$, we find that $\deg_{x_n}(\beta_i)=3-i$.  Likewise, we have proved in Proposition \ref{g2} that the total degree of the highest-degree monomials in $g_2$ is $2^n-1$.  That is, $\deg(\alpha_2x_1^2)=\deg(\alpha_1x_1)=\deg(\alpha_0)=2^n-1$.  This means that $\deg_{x_n}(\alpha_j)=2^n-1-j$.  Using these degrees we are then able to prove the following proposition.

\begin{proposition}\label{prop:degu}
    The degree of $u(x_n)$ in (\ref{eq:5x5}) is $3\cdot(2^n-1)$.
\end{proposition}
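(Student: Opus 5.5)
The plan is to treat the $5\times 5$ determinant in (\ref{eq:5x5}) with a weighted-degree (homogeneity) argument. We use the degree data already established: $\deg_{x_n}(\beta_i)=3-i$ for $i=0,\dots,3$, and $\deg_{x_n}(\alpha_j)=2^n-1-j$ for $j=0,1,2$. Give the entry in row $r$ and column $c$ a degree bound of the form $\rho_r+\gamma_c$. For the two $\beta$-rows the entry in column $c$ is $\beta_{3-(c-r)}$, whose $x_n$-degree is $c-r$. For the three $\alpha$-rows the entry in column $c$ of row $r'$ (numbered $1,2,3$ within the $\alpha$-block) is $\alpha_{2-(c-r')}$, whose $x_n$-degree is $2^n-3+c-r'$. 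So every entry has $x_n$-degree at most $\rho_r+c$, with $\rho_r=-r$ for the $\beta$-rows and $\rho_{r'}=2^n-3-r'$ for the $\alpha$-rows. Zero entries impose no condition.

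With this structure, every term of the Leibniz expansion $\prod_r M_{r,\sigma(r)}$ has degree at most $\sum_r\rho_r+\sum_c c$. This quantity does not depend on $\sigma$. Summing gives
\[
(-1-2)+\sum_{r'=1}^{3}(2^n-3-r')+(1+2+3+4+5)=-3+3(2^n-3)-6+15=3(2^n-1).
\]
This yields the upper bound $\deg u\le 3(2^n-1)$. It also shows that the coefficient of $x_n^{3(2^n-1)}$ in $u$ equals the same $5\times 5$ determinant with every entry replaced by its leading coefficient in $x_n$ at the prescribed degree, namely the top-degree parts $\bar\beta_i,\bar\alpha_j$. Equivalently, this is the resultant of the top-degree homogeneous components of $f_1$ and $g_2$ in $(x_1,x_n)$, which is the natural homogeneous refinement of the joint-degree bookkeeping in Remark~\ref{rem1} and Proposition~\ref{g2}.

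The main obstacle is the lower bound: showing that this leading determinant does not vanish. Expressed as a resultant of two binary forms $F(x_1,x_n)$ of degree $3$ and $G(x_1,x_n)$ of degree $2$, it vanishes exactly when $F$ and $G$ share a projective root. I would first make $F$ explicit from Section~\ref{sec:initial}. Its top form is the cubic part of $f_1'(-b_1^{-1}(x_1+x_n),x_1)$, which is a concrete cubic in $x_1,x_n$ depending on $a_0,a_1,b_0,b_1$. I would then argue that $G$, the top homogeneous part of $g_2$, does not depend on the constants $a_0,a_1,b_0$ in a way that is correlated with $F$; for example, it is governed by $t$ and $a_i,b_i$ with $i\ge 2$, after tracking the top-degree parts through the recursion of Proposition~\ref{prop:sparse}. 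For generic constants the two forms then have no common factor, so the determinant is a nonzero polynomial in the parameters, and it vanishes only on a proper subvariety. Under the paper's standing assumption that the constants are generic, this gives equality.

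If that structural tracking is too messy, the fallback is to show that the leading-coefficient resultant is a nonzero polynomial in the parameters by exhibiting a single specialization. The cleanest choice is a small case such as $n=3$, where the degree can be computed by hand, or the $p=1523,\ n=7$ example, where the degree equals $3(2^7-1)$. This proves nonvanishing generically, since a polynomial in the parameters that is nonzero at one point is nonzero on a dense open set. The catch is that the leading-coefficient expression is a different polynomial for each $n$, so for a proof valid for all $n$ the structural argument is preferable.
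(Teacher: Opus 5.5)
Your upper bound is the paper's argument, done more cleanly. The weights $\rho_r+c$ are exactly what make the paper's ``by inspection'' invariant $i_1+i_2+j_1+j_2+j_3=6$ hold for every nonzero Leibniz term. The paper stops there: it reads off $\deg u=3(2^n-1)$ from the fact that every nonzero term has that degree. Cancellation of the top coefficient is handled only through the informal genericity remark after the proof. So at the paper's level of rigor your argument is complete, and isolating the $5\times 5$ determinant of top coefficients (the resultant of the top forms) is sharper than anything the paper says. The lower-bound part you add beyond that, however, is not yet a proof, and its key step fails as stated.

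\textbf{The false step.} You claim the top form $G$ of $g_2$ does not involve $a_0,a_1,b_0$. That is false for the $g_2$ in (\ref{eq:5x5}). That polynomial has been reduced modulo $x_1^3-r(x_1,x_n)$ throughout the recursion, and $r$ comes from $f_1$, so $G$ depends essentially on $a_0,a_1,b_0,b_1$.

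\textbf{How to repair it.} The steps needed are these.
\begin{enumerate}
\item Since $\deg_{x_1,x_n}(r)\le 3$, the reductions act on top joint-degree parts as reduction modulo $F$. Hence $G=\tilde G-QF$, where $\tilde G$ is the top form of the \emph{unreduced} resultant chain, and $\tilde G$ does not involve $a_0,a_1,b_0,b_1$. It follows that $F$ and $G$ share a projective root if and only if $F$ and $\tilde G$ do.
\item You then need $\tilde G\neq 0$ in degree exactly $2^n-1$. Proposition~\ref{g2} only gives an upper bound, so this is not yet available. The structure helps here: in the sum of Proposition~\ref{prop:sparse} for $\res(f_2,g_3;x_2)$, only the term $i=h=2^{n-2}$ reaches joint degree $2^n-1$. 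Hence $\tilde G=\bar\alpha_h\,x_1^{2^{n-1}}(a_2x_1+b_2x_n)^{2^{n-2}}$, where $\bar\alpha_h$ is the top form of the leading $x_2$-coefficient of the unreduced $g_3$.
\item Proving $\bar\alpha_h\neq 0$ needs an induction down the chain. The natural tool is the identity that the leading $x_{a-1}$-coefficient of $g_a$ equals $g_{a+1}$ evaluated at $x_a=a_ax_1+b_ax_n$.
\item For generic coprimality you also need $F(P)\neq 0$ for generic $(a_0,a_1,b_0,b_1)$ at every fixed $P\in\mathbb{P}^1$. This holds because $a_0$ enters $F$ only through $a_0x_0^3$, and where $x_0=0$ one has $F=-2x_1^3$.
\end{enumerate}
Your single-specialization fallback, as you note yourself, settles only one value of $n$.
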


{\it Proof:} The determinant in (\ref{eq:5x5}) is computed as the sum of products of all entries where exactly one entry is chosen from each row and column.  That is, each product that does not involve any of the $0$'s in the matrix will have the form $\beta_{i_1}\beta_{i_2}\alpha_{j_1}\alpha_{j_2}\alpha_{j_3}$.  By inspection, or tedious calculations, we can see that an invariant that holds for all possible non-zero products is $i_1+i_2+j_1+j_2+j_3=6$.  Using the the known degrees $\deg_{x_n}(\beta_i)=3-i$ and $\deg_{x_n}(\alpha_j)=2^n-1-j$ we find that the degree of each non-zero term, and hence the degree of $u(x_n)$, is 
\begin{align*}
(3-i_1)+(3-i_2)+(2^n-1-j_1)+(2^n-1-j_2)+(2^n-1-j_3) =\\
3\cdot2^n+3-(i_1+i_2+j_1+j_2+j_3)=3\cdot 2^n+3-6=3\cdot(2^n-1)
\end{align*}\qed

In the very unlikely case, in which the leader coefficients of any of the polynomials involved in the computations end up being 0, the only way it could affect the process is by reducing the final degree and consequently also the overall complexity. 

Both the resultant approach and a Gröbner basis computation for solving (\ref{eq:GMV}) tries to find the univariate polynomial $u(x_n)$ in the ideal spanned by the polynomials $f_1,\ldots,f_n$.  The degree of $u$ is given by Proposition \ref{prop:degu}, showing that just writing it into memory has both time and memory complexity of order $\mathcal{O}(2^n)$.  This gives evidence to the following conjecture.

\begin{conjecture}\label{con:complexity}
The time and memory complexity of any algorithm that solves the equation system (\ref{eq:GMV}) by finding a univariate polynomial in $x_n$ in the ideal $\langle f_1,\ldots,f_n\rangle$ must have complexity at least $\mathcal{O}(2^n)$.
\end{conjecture}

\section{The {\tt ResultantSolver} Algorithm and Implementational Considerations}

In this section we consider details of the implementation of the algorithm for solving the main problem. Sections~\ref{sec:precomp}-\ref{sec:alg} present and summarize implementation details of Algorithm~\ref{alg:1} and briefly discuss its time and memory complexity. Lastly, Section~\ref{sec:compare} compares the performance of Algorithm~\ref{alg:1} with that of a state-of-the art Gröbner basis approach using Magma.

\subsection{Precomputation}\label{sec:precomp}

According to \cite{GMVproblem}, {\it``...any approach that would provide a solution $x_n$ for a given value $t$, based on some known number of solutions $x_n$  for other values of $t$ may also be extremely useful.''}. Although we do not base the solution process ``for other values of $t$'' on the specific solution set for a given $t$ value, we observe that $t$ occurs only in $f_n$. Thus, since variables are eliminated iteratively by applying equations one by one, then if we do not use $f_n$ until the very last elimination step, all the preceding elimination steps and results are valid for any value of $t$, and can be considered a joint precomputation. This is described in Algorithm~\ref{alg:1}. As shown in Table~\ref{tab:compare} in Section~\ref{tab:compare}, execution time can be split into the precomputation part (valid for all values of $t$) and the $t$-specific part. Indeed, for the cases we have considered, execution time for the final step is small compared to the precomputation.

\subsection{Resultant Computation} \label{sec:resultantorder}
Naïvely, variables $x_{n-1}, x_{n-2},\ldots,x_2$ can be eliminated iteratively in descending order. This is what is done in the algorithm described in Section 3, when computing $g_i= \res(f_i,g_{i+1};x_i)$ with backwards induction over $i$. This computation can then be described graphically with a binary tree as in Figure~\ref{fig:Unbalanced}. In Figure~\ref{fig:Unbalanced}, $g_i\in\mathbb{F}_p[x_1,x_{i-1},x_n]$ is computed as the resultant with respect to $x_i$ of $f_i\in\mathbb{F}_p[x_1,x_{i-1},x_i,x_n]$ and $g_{i+1}\in\mathbb{F}_p[x_1,x_i,x_n]$.
\begin{center}
\begin{figure}[htb]
\begin{tikzpicture}[very thick,level/.style={sibling distance=60mm/#1}]
\node [vertex] (r){$g_{2}$}
  child {
    node [vertex] (a) {$f_{2}$}
    }
  child {
      node [vertex] {$g_{3}$}
      child {
    node [vertex] (b) {$f_{3}$}
    }
      child {
      node [vertex] {$g_{4}$}
            child {
    node [vertex] (c) {$f_{4}$}
    }
            child {
      node [vertex] {$g_{5}$}
            child {
    node [vertex] (d) {$f_{5}$}
    }
            child {
      node [vertex] {$g_{6}$}
                 child {
    node [vertex] (d) {$f_{6}$}
    }
            child {
      node [vertex] {$f_{7}$}
    }
    }
    }
    }
    }
;
\end{tikzpicture}
\caption{Binary tree describing the algorithm from Section 3, with $n=7$.
}
\label{fig:Unbalanced}
\end{figure}
\end{center}

However, the order in which the resultants are computed is arbitrary, and a generalized version of the algorithm can be performed for an arbitrary binary tree structure on the list of polynomials $f_1, \dots f_n$. The fact that the univariate final polynomial $u(x_n)$ is the unique minimal polynomial in $\langle f_1, \dots , f_n\rangle\cap\mathbb{F}_p[x_n]$ implies that the order of the resultants does not matter for the algebra, and the propositions in Section~\ref{sec:BSMGS} can be adapted to any such order. For this reason, we generalize the construction in Section 3 by denoting $f_{[i,i]}=f_i$, and when $1\leq i\leq j\leq k< n$, introducing the polynomials $$f_{[i+1,k]}=\res(f_{[i+1,j]}, f_{[j+1,k]};x_j)\in\mathbb{F}_p[x_1, x_i,x_k,x_n].$$ In particular, we have $g_i=f_{[i,\dots n]}$ in the terminology of Section 3. À priori, this definition might depend on the $j$ chosen, but it typically will not, and in particular will not in the generic case where $f_{[i+1, k]}$ is the unique minimal polynomial in $\langle f_{i+1},\dots ,f_k\rangle\cap\mathbb{F}_p[x_1, x_i,x_k,x_n]$. Moreover, regardless of whether or not this is the case, any solution to the resultant equation will give a solution to the original equation system by backwards substitution.

As explained in Section~\ref{sec:precomp}, if we potentially want to use the same polynomial system with varying values of $t$, we want to consider the computation of $f_{[2,n-1]}$, which does not depend on $t$, as a precomputation. Moreover, in order to keep the total degrees down through the intermediate computations, it makes heuristic sense to include the polynomial $f_1$, which has degree $3$ in $x_n$, only in the last step of the calculations. The order in which the remaining resultants are combined does not matter for the number of resultants needed. However, the size of the determinants involved grows exponentially in the number of original polynomials $f_i$ combined, wherefore it is desirable to have as many computations as possible close to the leaves in the binary tree. This means that --- apart from not including $f_1$ and $f_n$ until the final step --- we want the binary tree of the computations to be as balanced as possible. 

Empirical observations of performance confirm that, for the cases we have considered, both memory usage and execution times are reduced by choosing a balanced binary tree structure for computing the resultants, as in Figures~\ref{fig:Balanced} and Figure~\ref{fig:Finalize}.  For clarity, in Figures~\ref{fig:Balanced} and ~\ref{fig:Finalize} as well as in the discussion of Algorithm~\ref{alg:1} in Section~~\ref{sec:alg}, we write out both the polynomials and their arguments $x_{U}$, which is used as shorthand for $\{x_i : i\in U\}$.  The index of the variable to be eliminated in the next step of the algorithm is underlined in each node. 

\begin{center}
\begin{figure}
\begin{tikzpicture}[very thick,level/.style={sibling distance=60mm/#1}]
\node [vertex] (r){$f_{[26]}(x_{167})$}
        child {
    node [vertex] (a) {$f_{[24]}(x_{1\sout{4}7})$}
    child {
      node [vertex] {$f_{[23]}(x_{1\sout{3}7})$}
      child {
        node [vertex] {$f_2(x_{1\sout{2}7})$}
      }
      child {node [vertex] {$f_3(x_{1\sout{2}37})$}}
    }
    child {
      node [vertex] {$f_4(x_{1\sout{3}47})$}
    }
  }
  child {
    node [vertex] {$f_{[56]}(x_{1\sout{4}67})$}
    child {
      node [vertex] {$f_5(x_{14\sout{5}7})$}
    }
    child {
      node [vertex] {$f_6(x_{1\sout{5}67})$}
    }
  }
  ;
\end{tikzpicture}
\caption{Binary tree describing the iterative pairwise computing of the polynomials $f_{[i+1, j]}\in\mathbb{F}_p[x_1, x_{i}, x_j,x_n]$. See Example \ref{ex:n7} for explicit description for the first resultants.  The notation $f(x_{a\sout{b}c})$ means that the polynomial $f$ contains variables $x_a,x_b,x_c$ and the variable $x_b$ is eliminated in the next step.}
\label{fig:Balanced}
\end{figure}
\end{center}

\begin{figure}
\begin{center}
\begin{tikzpicture}[very thick,level/.style={sibling distance=60mm/#1}]
\node [vertex] (r){$u(x_7)$}
    child {
      node [vertex] {$f_{1}(x_{\sout{1}7})$}
      }
      child {
      node [vertex] {$g_{2}(x_{\sout{1}7})$}
        child {
        node [vertex] {$f_{[26]}(x_{\sout{1}67})$}
      }  
        child {
      node [vertex] {$f_7(x_{\sout{6}7})$}
  }
  }
  ;
\end{tikzpicture}
\end{center}
\caption{Final steps of the algorithm, when $n=7$. Notice that only this part of the algorithm depends on $f_7$, and thereby on the parameter~$t$. The value $x_n$ for a solution to the polynomial $u$ is the desired output.}
\label{fig:Finalize}
\end{figure}

In the following example, we give the first resultant polynomials corresponding to Fig. \ref{fig:Balanced}. These resultants can be computed in parallel due to the balanced tree structure. 

\begin{example}\label{ex:n7}
System initialized: $n=7$, $p$ has $11$ bits. First pair of polynomials:
\begin{eqnarray*}
f_5 &=& 1279x_1x_4^2+488x_1x_4x_5+x_4^2x_5+1456x_4^2x_7+134x_4x_5x_7\\
&&+244x_1+2x_4+1522x_5+67x_7\\
f_6 &=& 92x_1x_5^2+1339x_1x_5x_6+x_5^2x_6+664x_5^2x_7+195x_5x_6x_7\\
&&+1431x_1+2x_5+1522x_6+859x_7
\end{eqnarray*}
Resultant polynomial of the first pair with $x_5$ eliminated:
\begin{eqnarray*}
\res(f_6,f_5;x_5) &=& 933x_1^2x_4^4x_6+388x_1^2x_4^4x_7+1494x_1^2x_4^3x_6x_7+936x_1x_4^4x_6x_7\\
&&+431x_1x_4^4x_7^2+1322x_1x_4^3x_6x_7^2+801x_4^4x_6x_7^2+49x_4^4x_7^3\\
&&+1327x_4^3x_6x_7^3+686x_1^2x_4^3+646x_1x_4^4+494x_1^2x_4^2x_6\\
&&+462x_1x_4^3x_6+1522x_4^4x_6+718x_1^2x_4^2x_7+698x_1x_4^3x_7\\
&&+679x_4^4x_7+29x_1^2x_4x_6x_7+476x_1x_4^2x_6x_7+330x_4^3x_6x_7\\
&&+460x_1x_4^2x_7^2+158x_4^3x_7^2+201x_1x_4x_6x_7^2+1286x_4^2x_6x_7^2\\
&&+1229x_4^2x_7^3+196x_4x_6x_7^3+837x_1^2x_4+693x_1x_4^2\\
&&+1519x_4^3+933x_1^2x_6+1061x_1x_4x_6+6x_4^2x_6\\
&&+388x_1^2x_7+825x_1x_4x_7+495x_4^2x_7+936x_1x_6x_7\\
&&+1193x_4x_6x_7+431x_1x_7^2+1365x_4x_7^2+801x_6x_7^2\\
&&+49x_7^3+646x_1+4x_4+1522x_6+679x_7
\end{eqnarray*}
Second pair of polynomials: 
\begin{eqnarray*}
f_3 &=& 1171x_1x_2^2+704x_1x_2x_3+x_2^2x_3+685x_2^2x_7+ 153x_2x_3x_7\\
&&+352x_1+2x_2+1522x_3+838x_7\\
f_4 &=& 1288x_1x_3^2+470x_1x_3x_4+x_3^2x_4+1409x_3^2x_7+ 228x_3x_4x_7\\
&&+ 235x_1+2x_3+1522x_4+114x_7
\end{eqnarray*}
Second resultant polynomial with $x_3$ eliminated:
\begin{eqnarray*}
\res(f_4,f_3;x_3) &=& 1497x_1^2x_2^4x_4+1003x_1^2x_2^4x_7
+557x_1^2x_2^3x_4x_7+1014x_1x_2^4x_4x_7\\
&&+365x_1x_2^4x_7^2+63x_1x_2^3x_4x_7^2+830x_2^4x_4x_7^2+1256x_2^4x_7^3\\
&&+1068x_2^3x_4x_7^3+1419x_1^2x_2^3+931x_1x_2^4+156x_1^2x_2^2x_4\\
&&+845x_1x_2^3x_4+1522x_2^4x_4+74x_1^2x_2^2x_7+1010x_1x_2^3x_7\\
&&+9x_2^4x_7+966x_1^2x_2x_4x_7+8x_1x_2^2x_4x_7+1487x_2^3x_4x_7\\
&&+856x_1x_2^2x_7^2+274x_2^3x_7^2+1460x_1x_2x_4x_7^2
+1112x_2^2x_4x_7^2\\
&&+79x_2^2x_7^3+455x_2x_4x_7^3+104x_1^2x_2
+506x_1x_2^2+1519x_2^3\\
&&+1497x_1^2x_4+678x_1x_2x_4+6x_2^2x_4
+1003x_1^2x_7\\
&&+513x_1x_2x_7+1469x_2^2x_7+1014x_1x_4x_7+36x_2x_4x_7\\
&&+365x_1x_7^2+1249x_2x_7^2+830x_4x_7^2+1256x_7^3\\
&&+931x_1+4x_2+1522x_4+9x_7
\end{eqnarray*}
After computing the rest of the pairwise resultants as shown in Fig. \ref{fig:Balanced}, we are left with variables $x_1, x_6, x_7$ in the topmost resultant $f_{[26]}$. Now, we can combine this with $f_7$ and eliminate $x_6$. Finally, we combine with $f_1$ (from which $x_0$ is already eliminated in the initial phase) to obtain a univariate polynomial $u(x_7)$.
\end{example}

\subsection{The {\tt ResultantSolver} Algorithm}
\label{sec:alg}
This section contains a pseudocode-style algorithm in two parts, Part (1) and Part (2). Part (1) computes the resultants of all polynomials $f_2,\ldots,f_{n-1}$, ending in a single polynomial $g_3$.  After this part we are left with a set of three polynomial equations $f_1=0, g_3=0$ and $f_n=0$ in the variables $x_1,x_{n-1}$, and $x_n$, and only $f_n$ contains the coefficient $t$.  Part (2) proceeds to derive the final univariate polynomial $u(x_n)$ for a specific value of $t$ by computing the last resultants $g_2=\res(f_n,g_3;x_{n-1})$ and $u(x_n)=\res(f_1,g_2;x_1)$.  Finally, a root-finding algorithm is used to solve $u(x_n)=0$.

\textcolor{red}{
\begin{algorithm}[hbt!]
\caption{{\tt ResultantSolver}: \\Pseudoalgorithm describing the procedure for solving  (\ref{eq:GMV})}
\label{alg:1}
\begin{algorithmic}
\Require The GMV equation system of equation (\ref{eq:GMV})
\Ensure Values for $x_n$ satisfying equation system.
\State {\it Part (1): Precomputation}  
\State Use $f_0$ to eliminate $x_0$, replace $f_1'$ by $f_1$ as described in Section \ref{sec:initial}.
\State $S \gets \{f_2,\ldots, f_{n-1}\}$
\For {each variable $i$ in $\{2,\ldots,n-2\}$ in the order given in \ref{sec:resultantorder}}  \do\\
    \State Choose two polynomials $f,g  \in S$ whose variable sets both contain $x_i$
    \State Replace $f$ and $g$ in $S$ by $\res(f,g;x_i)$ using (\ref{eq:computeRes}) \Comment{eliminates variable $x_i$}
\EndFor 
\State $g_3\gets$ single remaining polynomial in $S$
\State {\it Part (2):  (Can be iterated over different values of $t$)}  
\For{multiple values of $t$}
    \State Set $t$ in $f_n$
    \State $g_2\gets \res(f_n,g_3;x_{n-1})$
    \State $u\gets \res(f_1,g_2;x_1)$
    \State Find the root(s) of $u(x_n)$ in $\mathbb{F}_p$ using the Berlekamp--Rabin algorithm.
\EndFor
\end{algorithmic}
\end{algorithm}}

\subsubsection{Time Complexity} 
By Conjecture~\ref{con:complexity}, we expect that the complexity has a factor that is exponential in $n$. Also, having to work with integers modulo (a very large) $p$, we also expect to see a factor $\log_2(p)$. Thus, the overall execution time of Parts (1)+(2) can be assumed to be of the form ${\cal O}(2^{un} \log_2(p))$ for some $u>0$. 
{Experimental evidence (see Table~\ref{tab:compare}) suggests that in many cases we may have $u<1$.} 

\subsubsection{Memory Complexity} 
It is tricky to give an estimate on optimum memory usage. It is  not even clear that the last resultant ending in degree $3(2^n-1)$ is the heaviest step; in fact, the last resultant in Part (1) may actually be  bigger. Hence, while we observe that memory is a bottleneck for the algorithm, it is difficult to analyze all the options for reducing memory in a dynamic setting, also briefly discussed in Section~\ref{sec:improvements}.

\subsection{Comparison with Existing Methods}
\label{sec:compare}
In the following we compare the efficiency of our method with the current state-of-the-art in solving (generic) polynomial systems. The latter is represented by the \textrm{Variety}($\cdot$) functionality in the computer algebra system Magma V2.27-1. For the ideals associated with the GMV polynomial systems described in Section \ref{sec:GMVPolys}, this \textrm{Variety}-method consists of three steps. First, it computes a Gröbner basis in the grevlex order using the F$_4$-algorithm \cite{F4}. In the second step, this Gröbner basis is transformed to a Gröbner basis in the lexicographical order using a variant of the FGLM-algorithm \cite{FGLM}. This second basis contains a univariate polynomial in the ideal. The third and final step finds the roots of this univariate polynomial, and recovers the remainder of the associated solution. In our tests the running time has been dominated by the first two steps.

In Table \ref{tab:CompareTimes} we compare the running times of the Variety method in Magma with the resultant approach described earlier, for GMV polynomial systems with randomly generated coefficients $a_i,b_i,t$. The running times in Table~\ref{tab:CompareTimes} were  obtained by running the program on the same computer with~192GB~of~RAM. For the reference Gröbner basis algorithm, we observe that the execution times grow roughly by a factor $\approx \mathcal{O}(2^{3n})$ when $p$ is fixed. If the underlying ideal is in Shape position, then this is consistent with Proposition \ref{prop:degu}, as we expect the FGLM step to grow by a factor $\mathcal{O}(2^{\omega n})$, where practical values for the linear algebra constant lies in $2.81\leq \omega \leq 3$. 

\begin{table}
\centering
\begin{tabular}{c|cccccc}
\toprule
    $n$ & 7 & 8 & 9 & 10 & 11 & 12 \\
    \midrule 
	Magma & 0.28 & 1.95 & 13.91 & 100.33 & 781.38 & 6246.79 \\  
	\midrule
	Part 1 & $0.002$ & $0.007$ & $0.019$& $0.049$ & $0.090$ & $0.312$ \\
    \midrule
    Part 2  & $0.005$  & $0.010$  & $0.019$ & $0.032$  & $0.070$  & $0.152$ \\ 
    \bottomrule
\end{tabular}
\caption{
Execution times (in seconds) for solving the GMV polynomial systems (\ref{eq:GMV}) for increasing values of $n$. ``Magma" gives the total time cost for the Variety method implemented in Magma V2.27-1.  Part 1 gives the total time cost for part 1 of Algorithm~ \ref{alg:1} (packing the sparse resultants) and Part 2 gives the additional time needed to compute the solution for each value of $t$.  In both cases, the tests were performed on the same hardware platform, a workstation with 192GB RAM, but without using any parallelization.  All tests have been performed with randomly generated examples over the field $\mathbb{F}_p$ with $p = 8380417$.}
\label{tab:CompareTimes}
\end{table}

\begin{table}
\centering
\begin{tabular}{c|cccccc}
\toprule
    $n$ & 13 & 14 & 15 & 16 & 17 & 18 \\
    \midrule 
	Part 1 & 0.459 & 2.425 & 5.075 & 10.333 & 23.678 & 102.566 \\  
	\midrule
	Part 2 & 0.330 & 0.731 & 1.615 & 3.534 & 7.660 & 16.471 \\
    \bottomrule
\end{tabular}
\caption{
Execution times (in seconds) for solving the GMV polynomial systems (\ref{eq:GMV}) for further values of $n$.  Part 1 gives the total time cost for part 1 of Algorithm~ \ref{alg:1} and Part 2 gives the time for each iteration of part 2 of Algorithm \ref{alg:1}.  The tests were performed on a MacBook Air with 16GB RAM and an Apple M2 processor.  All tests have been performed with randomly generated examples over the field $\mathbb{F}_p$ with $p = 8380417$.}
\label{tab:ResTimes}
\end{table}

In Table \ref{tab:compare} we give some further details on the measured complexity of running Algorithm \ref{alg:1} on larger instances of (\ref{eq:GMV}).  These instances were run on a MacBook Air with 16GB of memory and an Apple M2 processor and therefore have faster run times than the workstation.  In Appendix \ref{app:printouts} we also show the complete printouts of our program, including the actual solutions of the example systems given in \cite{GMVproblem}.

Table~\ref{tab:compare} compares computation times when solving some of the example systems. 
\begin{table}[h!]
\begin{center}
\begin{tabular}{c c c c r r r}
 \hline
 GMV Example & $n$ & $\log_2(p)$ & dup & Part(1) & Part(2) &  ret\\  [0.5ex] 
 \hline
 A & 13 & 18 & 24573 & 0.453 & & \\   
 
  & & & & & 0.262 & 1.0\\
& & & & & 0.263 & 1.0\\
 \hline
 G & 17  &  22  & 393213 & 23.577 &  & \\
  & & & & & 7.530 & $\approx 81$\\
 & & & & & 7.506 & $\approx 81$\\
 \hline
 H & 17  &  73  & 393213 & 72.129 & & \\
  & & & & & 42.087 & $\approx 160$\\
 & & & & & 53.246 & $\approx 175$\\
 \hline
\end{tabular}
 \caption{Execution times for Algorithm~\ref{alg:1}, parts (1) and (2), in seconds. Here, dup = degree of final univariate polynomial.   In each case, dup is equal to $3\cdot (2^n-1)$. ret = relative execution time for parts (1)+(2), scaled so the total time for Example A is 1.0.  Observe that the total execution times for cases H and G differ by a factor close to $2$, and also that the total execution times for cases G and A differ by a factor $\approx 81$ which is $3^{17-13}$.  For more details, please see Appendix~\ref{app:printouts}.}      
 \label{tab:compare}
 \end{center}
\end{table}

\subsection{Some Potential Improvements}
\label{sec:improvements}

In order to speed up the processing, it is possible to use some tricks that we have so far not applied.
\begin{itemize}
    \item Parallelization: 
    \begin{itemize}
        \item Certain operations of part 1 can be parallelized. In particular, processes corresponding to different subtrees of Figure~\ref{fig:Balanced} can be performed in parallel.
        \item On a more fine-grained level, the resultant computation in equation (\ref{eq:computeRes}) consists of independent parts and lends itself to parallelization.
    \end{itemize}
    \item Optimizing the choice of pairs in the binary resultant tree (cf. Table \ref{fig:Balanced}) in order to avoid ``heavy'' nodes when getting closer to the top. In other words, the depths of the left-hand and the right-hand sides of the binary tree should be as close to each other as possible. 
    \item Adaptation to hardware environment: In our experiments, the amount of RAM available is the bottleneck that restricts the cases we can attack. At the cost of having to compute the same intermediate values multiple times, we may free up memory more often than we do. This is a time/memory tradeoff, and deciding the optimum strategy may be a nontrivial task.
\end{itemize}

\section{Summary}
We have presented a method for solving the GMV equation system, which is significantly faster than what is possible by brute-force or standard Gröbner basis methods. The method utilizes the theory of \emph{resultants} and the sparsity of the system, and inherently allows for parallelization (which we have not exploited at the moment).

Although the target of $n=521$ is at the moment far away, we believe there is room for improvements so that more of the example cases can be addressed. 
In particular, fully exploiting the parallelization potential and optimizing memory usage should improve the performance further.  However, we also believe that any algebraic algorithm that singles out a univariate polynomial in the ideal generated by the system must have complexity $\mathcal{O}(2^n)$.  This means having $n\approx \log_2(p)$ is out of reach if $p$ should be a $521$-bit prime.  If the relation between $n$ and $p$ can be relaxed, it is fully possible to solve the system for a small $n$ (say, $n\approx 20$), even if $p$ is a $521$-bit prime.

\bibliographystyle{abbrv}
\bibliography{bibliography}

\appendix

\section{Printouts From Solving Example Systems}
\label{app:printouts}

For reference, we include here the printouts from our implementation of Algorithm \ref{alg:1} when run on the example systems given in \cite{GMVproblem}.  These executions were done on a MacBook Air with 16GB memory and an Apple M2 processor, and not the same machine used for timings in Table \ref{tab:CompareTimes}. 

\subsection{Illustrative Toy Example}
\begin{verbatim}
System initialized: n=7, p has 11 bits
packing sparse resultants done in 0.009 seconds
Roots of univariate polynomial of degree 381 for t=1268:
x7 = 180
solution found in 0.005 seconds    
\end{verbatim}

\subsection{Case A}
\begin{verbatim}
System initialized: n=13, p has 18 bits
packing sparse resultants done in 0.453 seconds
g2 has 24573 terms
Roots of univariate polynomial of degree 24573 for t=28555:
x13 = 176007
solution found in 0.263 seconds
g2 has 24573 terms
Roots of univariate polynomial of degree 24573 for t=91766:
x13 = 98529
x13 = 100260
x13 = 39269
solution found in 0.262 seconds
g2 has 24573 terms
Roots of univariate polynomial of degree 24573 for t=99194:
x13 = 129426
x13 = 192514
solution found in 0.263 seconds
g2 has 24573 terms
Roots of univariate polynomial of degree 24573 for t=145052:
x13 = 76075
x13 = 175914
x13 = 112372
x13 = 179016
solution found in 0.263 seconds
g2 has 24573 terms
Roots of univariate polynomial of degree 24573 for t=85098:
x13 = 25091
solution found in 0.263 seconds
\end{verbatim}

\subsection{Case B}

\begin{verbatim}
System initialized: n=13, p has 18 bits
packing sparse resultants done in 0.456 seconds
g2 has 24573 terms
Roots of univariate polynomial of degree 24573 for t=2678:
x13 = 211
x13 = 19582
solution found in 0.262 seconds
g2 has 24573 terms
Roots of univariate polynomial of degree 24573 for t=142207:
x13 = 1091
solution found in 0.262 seconds
g2 has 24573 terms
Roots of univariate polynomial of degree 24573 for t=179873:
x13 = 165690
x13 = 136856
x13 = 31030
solution found in 0.262 seconds
g2 has 24573 terms
Roots of univariate polynomial of degree 24573 for t=18905:
x13 = 80081
solution found in 0.262 seconds
g2 has 24573 terms
Roots of univariate polynomial of degree 24573 for t=59939:
x13 = 123952
x13 = 64141
solution found in 0.263 seconds
\end{verbatim}

\subsection{Case C}
\begin{verbatim}
System initialized: n=13, p has 18 bits
packing sparse resultants done in 0.451 seconds
g2 has 24573 terms
Roots of univariate polynomial of degree 24573 for t=63664:
x13 = 141159
x13 = 129533
x13 = 19991
solution found in 0.263 seconds
g2 has 24573 terms
Roots of univariate polynomial of degree 24573 for t=96122:
x13 = 159594
x13 = 191455
x13 = 60057
solution found in 0.265 seconds
g2 has 24573 terms
Roots of univariate polynomial of degree 24573 for t=141701:
x13 = 141693
solution found in 0.263 seconds
g2 has 24573 terms
Roots of univariate polynomial of degree 24573 for t=35444:
x13 = 38135
x13 = 173393
x13 = 30357
solution found in 0.265 seconds
g2 has 24573 terms
Roots of univariate polynomial of degree 24573 for t=120184:
x13 = 189726
x13 = 28259
x13 = 69279
solution found in 0.264 seconds
\end{verbatim}

Solving the two example systems for $n=17$ clearly shows the solving complexity increases only linearly in $\log_2(p)$ if $n$ is kept fixed. 

\subsection{Case G}
\begin{verbatim}
System initialized: n=17, p has 22 bits
packing sparse resultants done in 23.577 seconds
g2 has 393213 terms
Roots of univariate polynomial of degree 393213 for t=3162200:
x17 = 2905037
x17 = 423964
solution found in 7.530 seconds
g2 has 393213 terms
Roots of univariate polynomial of degree 393213 for t=1948684:
x17 = 414099
solution found in 7.506 seconds
g2 has 393213 terms
Roots of univariate polynomial of degree 393213 for t=2832793:
x17 = 2773074
x17 = 344215
x17 = 2158164
solution found in 7.516 seconds
g2 has 393213 terms
Roots of univariate polynomial of degree 393213 for t=452267:
x17 = 3119706
solution found in 7.520 seconds
g2 has 393213 terms
Roots of univariate polynomial of degree 393213 for t=1114540:
x17 = 3462664
solution found in 7.501 seconds
\end{verbatim}

\subsection{Case H}
\begin{verbatim}
System initialized: n=17, p has 73 bits
packing sparse resultants done in 72.129 seconds
g2 has 393213 terms
Roots of univariate polynomial of degree 393213 for t=1058371656331684574765:
x17 = 3190850194874455142686
x17 = 4427392304825085580158
x17 = 2173747983333160095478
solution found in 42.087 seconds
g2 has 393213 terms
Roots of univariate polynomial of degree 393213 for t=21875805440386776668:
x17 = 1868566290398595250336
solution found in 53.246 seconds
g2 has 393213 terms
Roots of univariate polynomial of degree 393213 for t=3843998821665876069844:
x17 = 1298918500212281416535
x17 = 3180019097753843068048
x17 = 4714514066403504455654
solution found in 62.847 seconds
g2 has 393213 terms
Roots of univariate polynomial of degree 393213 for t=48043857385386667728:
x17 = 3085726008969165404256
solution found in 68.707 seconds
g2 has 393213 terms
Roots of univariate polynomial of degree 393213 for t=104940936778785605637:
x17 = 4179283657056012284330
x17 = 3129644745497065872124
x17 = 2702577589972463891281
x17 = 1708563171310648447784
x17 = 5180557299946937162500
solution found in 77.882 seconds

\end{verbatim}

\end{document}